\newcommand{\N}{\ensuremath{\mathbb {N}}}
\renewcommand{\epsilon}{\varepsilon}
\newtheorem{lem}{Lemma}
\begin{document}

\title{\sc Radix Conversion for IEEE754-2008 Mixed Radix Floating-Point Arithmetic}
\author{  
  \IEEEauthorblockN{Olga Kupriianova}
    \IEEEauthorblockA{ 
    UPMC Paris 6 -- LIP6 -- PEQUAN team\\
    4, place Jussieu\\
    75252 Paris Cedex 05, France\\
    Email: \url{olga.kupriianova@lip6.fr}}
  \and
  \IEEEauthorblockN{Christoph Lauter}
    \IEEEauthorblockA{
      UPMC Paris 6 -- LIP6 -- PEQUAN team\\ 
      4, place Jussieu\\
      75252 Paris Cedex 05, France\\    
      Email: \url{christoph.lauter@lip6.fr}}
  \and
  \IEEEauthorblockN{Jean-Michel Muller}
    \IEEEauthorblockA{
    CNRS -- ENS Lyon -- Université de Lyon\\
    46, allée d'Italie\\
    69364 Lyon Cedex 07, France\\
    Email: \url{jean-michel.muller@ens-lyon.fr}}
    }
\maketitle

\begin{abstract}
  Conversion between binary and decimal floating-point representations
  is ubiquitous. Floating-point radix conversion means converting both
  the exponent and the mantissa. We develop an atomic
  operation for FP radix conversion with simple straight-line
  algorithm, suitable for hardware design. Exponent conversion is
  performed with a small multiplication and a lookup table. It yields
  the correct result without error. Mantissa conversion uses a few
  multiplications and a small lookup table that is shared amongst all
  types of conversions. The accuracy changes by adjusting the computing
  precision.

\end{abstract}

\section{Introduction}

Humans are used to operate decimals while almost all the hardware is
binary. According to IEEE754-2008 norm~\cite{IEEE754-2008} a floating
point number is represented as $\beta^E\cdot m$, where $\beta^{p-1}
\le m \le \beta^p - 1$; $p$ is precision, $m \in \mathbb{N}$ is
mantissa, $E \in \mathbb{Z}$ is exponent and $\beta$, the base or
radix, is either two or ten. When the base $\beta=2$, we have binary
floating point (FP) numbers, when $\beta = 10$, the decimal
one. However, most of hardware is binary, so the decimal mantissas are
actually coded in binary. The formats for both radices differ by the
length of stored numbers. Standartization of decimal FP arithmetic
brings new challenges, e.g. supporting decimal transcendental
functions with essentially binary
hardware~\cite{Harrison2009}. In~\cite{Harrison2009} in order to
evaluate decimal transcendental function the format conversion is used
twice. The IEEE standard requires~\cite{IEEE754-2008} the
implementation of all the operations for different formats, but only
for the operands of the same radix. The format does not require any
mixed radix operations, i.e. one of the operands is binary, the other
is decimal.  Mixed radix arithmetic is currently being developed,
although there are already some approaches
published~\cite{CHATSG2009},~\cite{BLMM2013}.

Floating point radix conversion (from binary to decimal and vice
versa) is a widespread operation, the simplest examples are the
\texttt{scanf} and \texttt{printf} functions. It could also exist as
an operation for financial applications or as a ``precomputing step''
for mixed radix operations. The radix conversion is used in number
conversion operations, and implicitly in \texttt{scanf} and
\texttt{printf} operations.

The current implementations of \texttt{scanf} and \texttt{printf} are
correct only for one rounding mode and allocate a lot of memory. In
this paper we develop a unified atomic operation for the conversion,
so all the computations can be done in integer with the precomputed
memory consumption.

While radix conversion is a very common operation, it comes in
different variants that are mostly coded in ad-hoc way in existing
code. However, radix conversion always breaks down into to elementary
steps: determining an exponent of the output radix and computing a
mantissa in the output radix. Section~\ref{sec:steps} describes the
2-steps approach of the radix conversion, section~\ref{sec:Exponent}
contains the algorithm for the exponent computation,
section~\ref{sec:Mantissa} presents a novel approach of raising $5$ to
an integer power used in the second step of the radix-conversion that
computes the mantissa. Section~\ref{sec:error analysis} contains
accuracy bounds for the algorithm of raising five to a huge power,
section~\ref{sec:impldet} describes some implementation tricks and
presents experimental results.

\section{Two-steps Radix Conversion Algorithm}
\label{sec:steps}
Conversion from a binary FP representation $2^E \cdot m$, where $E$ is
the binary exponent and $m$ is the mantissa, to a decimal
representation $10^F \cdot n$, requires two steps: determination of
the decimal exponent $F$ and computation of the mantissa $n$. The
conversion back to binary is pretty similar except of an extra step
that will be explained later. Here and after consider the normalized
mantissas $n$ and $m$: $10^{p_{10}-1} \le n \le 10^{p_{10}} - 1$
and $2^{p_{2}-1} \le m \le 2^{p_{2}} - 1$, where $p_{10}$ and $p_2$
are the decimal and binary precisions respectively. The exponents $F$
and $E$ are bounded by some values depending on the IEEE754-2008 format.

In order to enclose the converted decimal mantissa $n$ into one
decade, for a certain output precision $p_{10}$, the decimal exponent
$F$ has to be computed~\cite{MullerEtAl2010} as follows:
\begin{equation}
  \label{eq:decimalExp}
  F = \left\lfloor \log_{10}(2^E \cdot m) \right\rfloor - p_{10} + 1.
\end{equation}
The most difficult thing here is the evaluation of the logarithm: as
the function is transcendental, the result is always an approximation
and function call is extremely expensive. Present algorithm computes
the exponent (\ref{eq:decimalExp}) for a new-radix floating-point
number only with a multiplication, binary shift, a precomputed
constant and a lookup table (see section~\ref{sec:Exponent}).

Once $F$ is determined, the mantissa $n$ is given as
\begin{equation}
\label{eq:decimalMant}
n = *_{p_{10}} \left( \frac{2^E \cdot m}{10^F}\right),
\end{equation}
where $*_{p_{10}}$ corresponds to the current rounding mode (to the
nearest, rounding down, or rounding up~\cite{IEEE754-2008}).  The
conversions are always done with some error $\varepsilon$, so the
following relation is fulfilled: $10^F \cdot n = 2^E \cdot m \cdot (1
+ \varepsilon)$. In order to design a unique algorithm for all the
rounding modes it is useful to compute $n^*$, such that $10^F \cdot
n^* = 2^E \cdot m$. 
Thus, we get the following expression for the decimal mantissa:
$$n^* = 2^{E- F} 5^{-F}m$$
As $2^{E - F}$ is a simple binary shift and the multiplication by $m$
is small, the binary-to-decimal mantissa conversion reduces to
compute the leading bits of $5^{-F}$. 

The proposed ideas apply with minor changes to decimal-to-binary
conversion: the base of the logarithm is 2 on the exponent computation
step and one additional step is needed; for the mantissa
computation the power $5^F$ is required instead of $5^{-F}$.

\section{Loop-Less Exponent Determination}
\label{sec:Exponent}
The current implementations of the logarithm function are expensive
and produce approximated values. However, some earlier conversion
approaches computed this approximation \cite{Gay90} by Taylor series
or using iterations \cite{SW1990, BD1996}. Here the exponent for the
both conversions is computed exactly neither with \texttt{libm}
function call nor any polynomial approximation.

After performing one transformation step,~(\ref{eq:decimalExp}) can be rewritten as following:
$$ F = \left\lfloor E \log_{10}(2) + \left\lfloor\log_{10}(m) \right\rfloor + \left\{ \log_{10}(m) \right\}\right\rfloor -p_{10} + 1, $$
where $\{x\} = x - \lfloor x \rfloor$, the fractional part of the
number $x$.

As the binary mantissa $m$ is normalized in one binade $2^{p_{2}-1}
\le m < 2^{p_{2}} $, we can assume that it lies entirely in one
decade. If it is not the case, we can always scale it a little
bit. The inclusion in one decade means that
$\lfloor\log_{10}(m)\rfloor$ stays the same on the whole interval. So,
for the given format one can precompute and store this value as a
constant. Thus, it is possible to take the integer number
$\left\lfloor \log_{10}(m)\right\rfloor$ out of the floor
operation in the previous equation. After representing the first summand as a sum of it's
integer and fractional parts, we have the following expression under
the floor operation:
$$ \left\lfloor \left\lfloor E \log_{10}(2)\right\rfloor + \left\{E \log_{10}(2)\right\} + \left\{ \log_{10}(m) \right\}\right\rfloor. $$
Here we add two fractional parts to an integer. We add something that
is strictly less than two, so under the floor operation we have either
an integer plus some small fraction that will be thrown away, or an
integer plus one plus small fraction. Thus, we can take the fractional
parts out of the floor brackets adding a correction $\gamma$:
$$\left\lfloor E \log_{10}(2)\right\rfloor + \gamma,  \: \gamma \in \{0, 1\}.$$
This correction $\gamma$ equals to 1 when the sum of two fractional
parts from the previous expression exceeds 1, or mathematically:
$$  E\log_{10}(2) - \left\lfloor E\log_{10}(2) \right\rfloor  + \log_{10}(m) - \left\lfloor \log_{10}(m)\right\rfloor  \ge 1.$$
Due to the logarithm function the expression on the left is strictly
monotonous (increasing). This means that we need only one threshold
value $m^*(E)$, such that $\forall m \ge m^*(E)$ the correction
$\gamma=1$.  As we know the range for the exponents $E$ beforehand, we
can store the critical values $m^*(E)= 10^{1 - \left ( E \log_{10}2 -
    \left\lfloor E\log_{10}2\right\rfloor \right) + \left\lfloor
    \log_{10}(m)\right\rfloor}$ in a table.

There is a technique proposed in \cite{BM2008} to compute
$\left\lfloor E \log_{10}(2)\right\rfloor$ with a multiplication,
binary shift and the use of a precomputed constant. So, finally the
value of the decimal exponent can be obtained as
\begin{equation}
\label{eq:op_1^10Final}
F = \left\lfloor E \left\lfloor \log_{10}(2) \cdot 2^{\lambda} \right\rfloor \cdot 2^{-\lambda}\right\rfloor + \left\lfloor \log_{10}(m)\right\rfloor -p_{10} + 1 + \gamma 
\end{equation}
The algorithm pseudocode is provided below.

\begin{algorithm}[t]
  \label{algo:exp}
  \SetKwInOut{Input}{input}
  \SetKwInOut{Output}{output}
  \Input{$E$, $m$}
  $F \leftarrow E \cdot \lfloor \log_{10}(2) \cdot 2^{\lambda} \rfloor$; //multiply by a constant\;
  $F \leftarrow \lfloor F \cdot 2^{-\lambda}\rfloor$; //binary right shift\;
  $F \leftarrow  F + \lfloor \log_{10}(m) \rfloor + 1 - p_{10}$; //add a constant\;
  \If{$ m \ge m^*(E)$}{
      $F \leftarrow F + 1$\;
    }

  \caption{The exponent computation in the conversion from binary to decimal floating-point number}
\end{algorithm}

The decimal-to-binary conversion algorithm is the same with a small
additional remark. We want to convert decimal FP numbers to binary, so
the input mantissas are in the range $10^{p_{10}-1} \le n <
10^{p_{10}}$. As we mentioned, on this step the base of the logarithm
is 2, and the problem here is that $\lfloor\log_2(10)\rfloor=3$, so it
seems that we need three tables, but once we represent the decimal
mantissa $n$ as a binary FP number $n = 2^{\hat{E}} \hat{m}$ in some
precision $\kappa$, it suffices just one table. For all the possible
values $\hat{m}$ the following holds
$\lfloor\log_2(\hat{m})\rfloor=\kappa-1$. This mantissa representation
can be made exact: we'll have to shift the decimal $n$ to the
left. Thus, the precision of this new number is $\kappa = \lceil
\log_{2}(10^{p_{10}}-1) \rceil$.

So, the proposed algorithm works for both conversion
directions. However, one can notice, that for binary-to-decimal
conversion the table size can be even reduced by the factor of two. We
have used the mantissas from one binade: $2^{p_{2}-1} \le m <
2^{p_{2}} $. The whole reasoning stays the same if we scale these
bounds in order to have $1 \le m < 2$, the table entries quantity
stays the same. Now it is clear that
$\left\lfloor\log_{10}(m)\right\rfloor = 0$ for all these
mantissas. However, it still stays zero if we slightly modify the
mantissa's bounds: $\forall m': 1 \le m' < 4 ,\, \log_{10}(m') = 0$.
Thus, we get a new binary representation of the input: $2^{E'}m' =
2^{E}m$, where $E' = E - (E\mod2)$ and $m' = m \cdot 2^{E \mod
  2}$. So, we see that for the new mantissas interval we do not take
into account the last exponent bit. So, the table entries quantity for
the values $m^*(E)$ reduces twice. The corresponding interval for
mantissas is $[1, 4)$, because in this case we need to find the
remainders of two, that is just a binary shift. The interval $[1, 8)$
is larger, so it could reduce the table size even more, but requires
computation of the remainders of three.

The table sizes for some particular formats are small enough to be
integrated in hardware. However, these tables are quite multipurpose,
they are shared between all I/O and arithmetic decimal-binary FP
conversions, so, once they are coded, they could be used in all the
mixed radix operations. The corresponding table sizes for different
formats are listed in table~\ref{tab:mantissaLUT}.

\begin{table}[t]
\normalsize
\begin{center}
\begin{tabular}{|c|l|}
\hline
Initial Format & Table size \\
\hline
\texttt{binary32}       & 554 bytes\\    \hline
\texttt{binary64}       & 8392 bytes \\  \hline
\texttt{binary128}      & 263024 bytes \\ \hline

\texttt{decimal32}      & 792 bytes \\ \hline
\texttt{decimal64}      & 6294 bytes \\ \hline
\texttt{decimal128}     & 19713 bytes \\ \hline
\end{tabular}
\end{center}
\caption{Table size for exponent computation step}
\label{tab:mantissaLUT}
\end{table}

\section{Computing the Mantissa with the Right Accuracy}
\label{sec:Mantissa}

As it was mentioned, the problem on the second step is the computation
of the value $5^B$ with some bounded exponent $B \in \N$. If the
initial range for the exponent of five contains negative values, we
compute $5^{B + \bar{B}}$, where $\bar{B}$ is chosen in order to make
the range for the exponents nonnegative. In this case we store the
leading bits of $5^{-\bar{B}}$ as a constant and after computing $5^{B
  + \bar{B}}$ with the proposed algorithm, we multiply the result by
the constant.

In this section we propose an algorithm for raising five to a huge
natural power without rational arithmetic or divisions. The range for
these natural exponents $B$ is determined by the input format,
e.g. for the conversion from binary64 the range is about six hundred.

We propose to perform several Euclidean divisions in order to
represent the number $B$ the following way:
\begin{equation}
\label{eq:Bform}
B = 2^{n_k}\cdot q_k + 2^{n_{k-1}}q_{k-1} + \ldots + 2^{n_1} q_1 + q_0, 
\end{equation}
where $0 \le q_0 \le 2^{n_1}-1$, $n_k \ge n_{k-1}, \, k \ge 1$. The
mentioned divisions are just a chain of binary shifts. All the quotients
are in the same range and we assume that the range for $q_0$ is the largest
one, so we have $q_i \in [0; 2^{n_1}-1], \, 0 \le i \le k$.  Once the
exponent is represented as~(\ref{eq:Bform}), computation $5^B$ is done
with the respect to the following expression:
\begin{equation}
\label{eq:5Bform}
5^B = (5^{q_k})^{2^{n_k}}\cdot (5^{q_{k-1}})^{2^{n_{k-1}}} \cdot \ldots \cdot (5^{q_1})^{2^{n_1}} \cdot 5^{q_0}
\end{equation}
Let us analyze how the proposed formula can simplify the algorithm of
raising five to the power $B$. We mentioned that all the quotients
$q_i$ are bounded. By selecting the parameters $k$ and $n_i$ we can
make these quotients small, so the values $5^{q_i}$ can be stored in a
table. Then, each factor in~(\ref{eq:5Bform}) is a table value raised
to the power $2^{n_i}$ which is the same as a table value squared
$n_i$ times. 

So, the algorithm is the following: represent $B$ as (\ref{eq:Bform})
and get the values $q_i$, then for each $q_i$ get the table value
$5^{q_i}$ and perform $n_i$ squarings, and finally multiply all the
squared values beginning from the largest one. The scheme can be found
on Fig.~\ref{fig:algo2}, the pseudocode for squarings is in
algorithm~\ref{algo:squarings} and for the final multiplication step
in algorithm~\ref{algo:mult}. All these steps are done in order to
convert the FP numbers, so we simulate usual floating-point
computations in integer. The exponent $B$ is huge, the value $5^B$ is
also huge, so we can store only the leading bits. Thus, on each
multiplication step (squarings are also multiplications) we throw away
the last $\lambda$ bits. Of course these manipulations yield to an
error, in section~\ref{sec:error analysis} there are details and
proofs for the error analysis. 

There is still one detail in algorithm~\ref{algo:squarings} that
was not explained: the correction $\sigma_j$. The mantissa of the
input number is represented as a binary number bounded by one binade
(for both, binary and decimal formats). Assume that we operate the
numbers in the range $[2^{p-1}, 2^p)$. After each squaring we can get
a value less then infimum of this range. So, if the first bit of the
intermediate result after some squarings is 0, we shift it to the
left.

\begin{algorithm}[t]
\label{algo:squarings}
  \SetKwInOut{Input}{input}
  \Input{$ n_j, \, v_j = 5^{q_j}$}
  $\sigma_j \leftarrow 0$\;
  \For{$i \leftarrow 1$ \KwTo $n_j$} {
    $v_j\leftarrow \lfloor v_j^2 \cdot 2^{-\lambda} \rfloor $\;  
    shiftNeeded $\leftarrow 1 - \lfloor v_j \cdot 2^{1-p} \rfloor $ //get the first bit\;
    $v_j \leftarrow v_j \ll $ shiftNeeded\;
    $\sigma_j \leftarrow 2\cdot \sigma_j + $shiftNeeded\;
  }             
  result $\leftarrow v_j \cdot 2^{-\sigma_j} \cdot 2^{(2^{n_j}-1)\lambda}$\;                                              
\caption{Squaring with shifting $\lambda$ last bits}
\end{algorithm}  

The described algorithm is applied $k$ times to each factors
in~(\ref{eq:5Bform}). Then the last step is to multiply all the
factors starting from the largest power like in listing below.

  \begin{algorithm}[t]
    \label{algo:mult}
                $m \leftarrow $ 1\;
                \For{$i \leftarrow k$ \KwTo $1$}{
                  $m \leftarrow \left\lfloor (m \cdot v_i) \cdot 2^{-\lambda} \right\rfloor$\; 
                }
                $m \leftarrow \left\lfloor (m \cdot 5^{q_0}) \cdot 2^{-\lambda} \right\rfloor $\;   
                $m \leftarrow m \cdot 2^{( (2^{n_k}-1) + (2^{n_{k-1}}-1) + \cdots + (2^{n_1}-1) + k)\lambda - \sum_{i=k}^1 \sigma_i} $\;
                $s \leftarrow {\sum_{i=k}^1 \left(n_i(\left\lfloor\log_2(5^{q_i})\right\rfloor - p + 1)\right) + \left\lfloor\log_2(5^{q_0})\right\rfloor-p+1}$\;
                result $\leftarrow m \cdot 2^{s}$\;
                \caption{Final multiplication step}
              \end{algorithm}

The whole algorithm schema is presented on Fig.~\ref{fig:algo2}.
Depending on the range of $B$ one can represent it in different
manner, but for our conversion tasks the ranges for $B$ were not that
large, so the numbers $n_j$ were not more than 10 and the loops for
squarings can be easily unrolled. For instance, for the conversions
from binary32, binary64, decimal32 and decimal64 one can use the
expansion of $B$ of the following form:
$$B = 2^8 \cdot q_2 + 2^4 \cdot q_1 + q_0$$
\begin{center}
  \begin{figure}[t]
    \scalebox{.9}{\input{./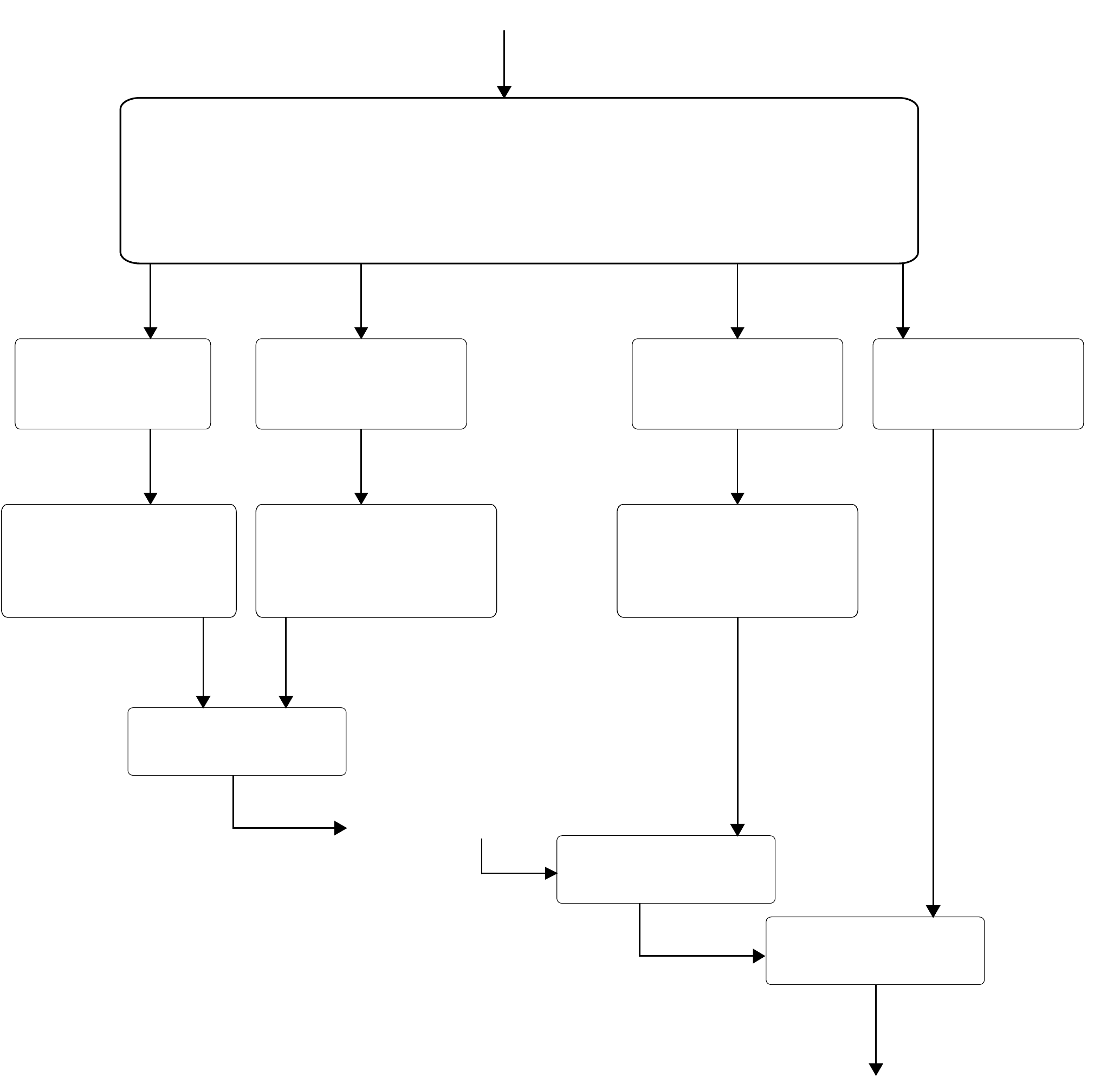_tex}}
    \caption{Raising 5 to a huge power}
    \label{fig:algo2}
  \end{figure} 
\end{center}

\section{Error Analysis}
\label{sec:error analysis}

In order to compute the mantissa we use integer arithmetic but on each
squaring/multiplication step we throw away a certain quantity of
bits. So the final error is due to these right shiftings on each
multiplication step. 

We have errors only due to the multiplications, and as we do a lot of
them, we need to define $N$ as the number of all the multiplications
(squaring is just a particular case of multiplication). For each $i$-th factor ($1\le i \le N$)
in~(\ref{eq:5Bform}) we need to perform $n_i$ squarings, thus it gives
us $n_i$ multiplications. In order to get the final result we have to
perform $k$ more multiplications, so the final expression for the $N$
constant is $$N=\sum_{i=1}^kn_i + k.$$ So, the result is a product of
$N$ factors and on each step we have some relative error $\varepsilon_i$. This
means, that if we define $y$ as the exact product without errors, then
what we really compute in our algorithm can be represented as
following:
$$\hat{y} = y \prod_{i=1}^N(1+\varepsilon_i).$$
Thus, the relative error of the computations is 
$$\varepsilon = \frac{\hat{y}}{y}-1=\prod_{i=1}^N(1+\varepsilon_i) - 1$$

Let us prove a lemma that will help us to find the bounds for the
relative error of the result.

\begin{lem}
  Let $N \ge 3$,  $0 \le \bar{\varepsilon} < 1$ and $|\varepsilon_i| \le
  \bar{\varepsilon}$ for all $i \in [1, N]$. Then the following holds:
  $$\left| \prod_{i=1}^{N}(1+\varepsilon_i) - 1 \right | \le (1 + \bar{\varepsilon})^N - 1.$$
\end{lem}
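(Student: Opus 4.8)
The plan is to prove the inequality by bounding the absolute value of the product deviation from above, term by term. The key observation is that the function $g(\varepsilon_1,\ldots,\varepsilon_N) = \prod_{i=1}^N(1+\varepsilon_i) - 1$ attains its maximum absolute value when each $\varepsilon_i$ sits at an endpoint of $[-\bar\varepsilon,\bar\varepsilon]$; this is essentially the content of the (commented-out) Weierstrass argument, since the partial derivatives $\partial g/\partial\varepsilon_j = \prod_{i\neq j}(1+\varepsilon_i)$ never vanish when all $|\varepsilon_i| < 1$, so there are no interior critical points. Hence the extremum lies on the boundary, where each factor is either $1+\bar\varepsilon$ or $1-\bar\varepsilon$.

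Once we are on the boundary, suppose $r$ of the factors equal $1-\bar\varepsilon$ and $N-r$ equal $1+\bar\varepsilon$. Then $\prod_i(1+\varepsilon_i) = (1-\bar\varepsilon)^r(1+\bar\varepsilon)^{N-r}$, and I would show that for any such configuration,
$$-\big((1+\bar\varepsilon)^N - 1\big) \le (1-\bar\varepsilon)^r(1+\bar\varepsilon)^{N-r} - 1 \le (1+\bar\varepsilon)^N - 1.$$
The upper bound is immediate: replacing each $(1-\bar\varepsilon)$ factor by $(1+\bar\varepsilon)$ only increases the product (since $0 \le \bar\varepsilon < 1$ keeps all factors positive), so $(1-\bar\varepsilon)^r(1+\bar\varepsilon)^{N-r} \le (1+\bar\varepsilon)^N$, giving the right-hand inequality. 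For the lower bound, since all factors are nonnegative, $(1-\bar\varepsilon)^r(1+\bar\varepsilon)^{N-r} \ge 0 \ge 1 - (1+\bar\varepsilon)^N$ (the last step because $(1+\bar\varepsilon)^N \ge 1$). Combining the two sides yields $\big|(1-\bar\varepsilon)^r(1+\bar\varepsilon)^{N-r} - 1\big| \le (1+\bar\varepsilon)^N - 1$, and since the boundary value dominates the interior value, the claim follows for all $\varepsilon_i$ in the hypothesized range.

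An alternative, perhaps cleaner, route avoids the extremal-value argument entirely: expand $\prod_{i=1}^N(1+\varepsilon_i) - 1 = \sum_{\emptyset \neq S \subseteq \{1,\ldots,N\}} \prod_{i\in S}\varepsilon_i$, apply the triangle inequality, bound each $|\prod_{i\in S}\varepsilon_i| \le \bar\varepsilon^{|S|}$, and recognize the resulting sum $\sum_{j=1}^N \binom{N}{j}\bar\varepsilon^j = (1+\bar\varepsilon)^N - 1$. This is arguably the shortest proof and uses only the binomial theorem and the triangle inequality; I would probably present this one as the main argument and perhaps remark on the optimization viewpoint.

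The main obstacle, such as it is, is minor: one must be careful that the lower-bound direction of the absolute value is genuinely covered — i.e. that the product, even when several factors are $1-\bar\varepsilon$ and hence the product is small, never dips below $1 - \big((1+\bar\varepsilon)^N-1\big) = 2 - (1+\bar\varepsilon)^N$, which could be negative; the point is simply that a nonnegative quantity minus $1$ is always $\ge -1 \ge -( (1+\bar\varepsilon)^N - 1)$. The hypothesis $\bar\varepsilon < 1$ is what guarantees the factors stay positive (so the extremal argument and the sign bookkeeping both go through), and $N \ge 3$ is not really needed for this inequality — it is presumably retained only for consistency with how the lemma is applied later. With the combinatorial expansion in hand, the whole proof is a two-line computation.
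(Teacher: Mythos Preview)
Your first (extremal) route has a genuine gap in the lower-bound direction. You argue that since the product is nonnegative, $\prod_i(1+\varepsilon_i)-1\ge -1\ge -\big((1+\bar\varepsilon)^N-1\big)$. The second inequality is equivalent to $(1+\bar\varepsilon)^N\ge 2$, which is false whenever $\bar\varepsilon$ is small: e.g.\ with $\bar\varepsilon=0.1$ and $N=3$ one has $(1.1)^3=1.331<2$. In that regime the required lower bound $2-(1+\bar\varepsilon)^N$ is strictly positive, and ``product $\ge 0$'' does not reach it. The paper closes exactly this gap: it first bounds the product from below by $(1-\bar\varepsilon)^N$ (the same multiplication-of-positive-inequalities you used for the upper bound) and then proves $(1-\bar\varepsilon)^N+(1+\bar\varepsilon)^N\ge 2$ by noting that in the binomial expansion the odd-degree terms cancel and the remaining terms are nonnegative.

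Your second, combinatorial route is correct and is a genuinely different argument from the paper's. Expanding $\prod_i(1+\varepsilon_i)-1=\sum_{\emptyset\neq S}\prod_{i\in S}\varepsilon_i$ and applying the triangle inequality term-by-term yields the bound in one line and handles both sides of the absolute value at once, whereas the paper treats the upper and lower inequalities separately and needs the auxiliary inequality $(1-\bar\varepsilon)^N+(1+\bar\varepsilon)^N\ge 2$. Your version is shorter and more transparent; the paper's has the incidental by-product of showing that the extreme $(1-\bar\varepsilon)^N$ is never the one that determines the absolute value. You are also right that the hypothesis $N\ge 3$ is not used in either proof.
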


\begin{proof} 
This inequality is equivalent to the following:
$$ -(1 + \bar{\varepsilon})^N + 1 \le  \prod_{i=1}^{N}(1+\varepsilon_i) - 1 \le (1 + \bar{\varepsilon})^N - 1$$

The proof of the right side is trivial.  From the lemma condition we
have $-\bar{\varepsilon} \le \varepsilon_i \le \bar{\varepsilon}$,
which is the same as $1 - \bar{\varepsilon} \le \varepsilon_i +1 \le
\bar{\varepsilon} + 1$ for arbitrary $i$ from the interval
$[1,N]$. Taking into account the borders for $\bar{\varepsilon}$, we
get that $0 < (1+\varepsilon_i) < 2$ for all $i \in [1, N]$. This
means that we can multiply the inequalities $1+\varepsilon_i \le
\bar{\varepsilon}+1$ by $1+\varepsilon_j$ with $j\neq i$. After
performing $N-1$ such multiplications and taking into account that
$1+\varepsilon_i \le \bar{\varepsilon}+1$, we get the following:
$$\prod_{i=1}^N(\varepsilon_i +1) \le (\bar{\varepsilon} +
1)^N.$$

So, the right side is proved.

The same reasoning applies for the left bounds from the lemma
condition, and the family of inequalities $1 - \bar{\varepsilon} \le
\varepsilon_i +1 $ leads to the condition:
$$(1-\bar{\varepsilon})^N - 1 \le \prod_{i=1}^N(1+\varepsilon_i)
- 1.$$ 
So, in order to prove the lemma we have to prove now
that $$-(1 + \bar{\varepsilon})^N + 1 \le(1-\bar{\varepsilon})^N -
1.$$ After regrouping the summands we get the following expression to prove:
$$ 2 \le (1 + \bar{\varepsilon})^N + (1 -  \bar{\varepsilon})^N.$$
Using the binomial coefficients this trasforms to
$$  2 \le 1+ \sum_{i=1}^N {N \choose i} \bar{\varepsilon}^i + 1 + \sum_{i=1}^N {N \choose i} (-\bar{\varepsilon})^i$$
On the right side of this inequality we always have the sum of 2
and some nonnegative terms. So, the lemma is proven.
\end{proof}

The error $\bar{\varepsilon}$ is determined by the basic
multiplication algorithm. It takes two input numbers (each of them is
bounded between $2^{p-1}$ and $2^p$), multiplies them and cuts
$\lambda$ last bits, see line 3 of algorithms~\ref{algo:squarings}
and~\ref{algo:mult}. Thus, instead of $v_j^2$ on each step we get
$v_j^22^{-\lambda} + \delta$, where $-1 < \delta \le 0$. So, the
relative error of the multiplication is bounded by $\left|
  \bar{\varepsilon} \right| \le 2^{-2p+2+\lambda}$.

\section{Implementation Details}
\label{sec:impldet}

While the implementation of the first step is relatively simple, we
need to specify some parameters and techniques that we used to
implement raising 5 to an integer power.

The used computational precision $p$ was equal to 128 bits. The
standard C integer types give us either 32 or 64 bits, so for the
implementation we used the \texttt{uint128\_t} type from GCC that is
realised with two 64-bit numbers. As a shifting parameter $\lambda$ we
took 64, so getting most or least 64 bits out of \texttt{uint128\_t}
number is easy and fast. Squarings and multiplications can be easily
implemented using typecastings and appropriate shifts. Here, for
instance, we put the code of squaring the 64-bit integer. The function
returns two 64-bit integers, so the high and the low word of the
128-bit number.

\lstset{
language=C,
basicstyle=\small\sffamily,
numbers=left,
numberstyle=\tiny,
frame=tb,
columns=fullflexible,
showstringspaces=false,
captionpos=b
}

 \lstloadlanguages{
         C
 }

\begin{lstlisting}[language=C, caption=Example. C code sample for squaring a 64-bit number.]
void square64(uint64_t * rh, 
		     uint64_t * rl, 
		     uint64_t a) {
  uint128_t r;

  r = ((uint128_t) a) * ((uint128_t) a);

  *rl = (uint64_t) r;
  r >>= 64;
  *rh = (uint64_t) r;
}
\end{lstlisting}

The other functions were implemented in the same manner.

We have implemented an run parametrized algorithm for computation of
$5^B$, as the parameter we took the table index size (for entries
$5^{q_i}$) and the working precision $p$. We see
(Fig.~\ref{fig:accuracy}) that the accuracy depends almost linearly on
the precision.

\begin{center}
\begin{figure}
  \scalebox{0.47}{\includegraphics[scale=0.68]{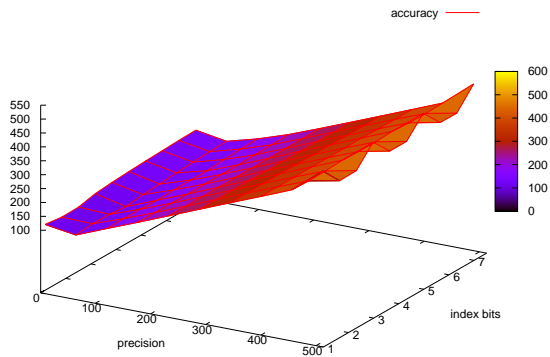}}
\caption{Accuracy as a function of precision and table index size}
\label{fig:accuracy}
\end{figure}
\end{center}

\section{Conclusions}

A novel algorithm for conversion between binary and decimal
floating-point representations has been presented. All the
computations are done in integer arithmetic, so no FP flags or modes
can be influenced. This means that the corresponding code can be made
reentrant.  The exponent determination is exact and can be done with
several basic arithmetic operations, stored constants and a table.
The mantissa computation algorithm uses a small exact table. The
error analysis is given and it corresponds to the experimental
results. The accuracy of the result depends on the computing precision
and the table size.  The conversions are often used and the tables are
multipurpose, so they can be reused by dozens of algorithms. As this
conversion scheme is used everywhere and the tables are not large,
they might be integrated in hardware. The implementation of the
proposed algorithm can be done without loops, so it reduces the
instructions that control the loop, optimizes and therefore
accelerates the code. The described conversion approach was used in
the implementation of the \texttt{scanf} analogue in
\texttt{libieee754} library~\cite{KLSCAN2012}.

\bibliographystyle{IEEEtran}
\bibliography{mybibliography}
\end{document}